\newtheorem{defn}{Definition}
\newtheorem{rem}[defn]{Remarks}
\newtheorem{lem}[defn]{Lemma}
\newtheorem{assum}[defn]{Assumption}
\newtheorem{ex}[defn]{Example}
\newtheorem{thm}[defn]{Theorem}
\providecommand{\R}{\ensuremath \mathbb{R}}
\providecommand{\N}{\ensuremath \mathbb{N}}
\providecommand{\X}{\ensuremath \mathcal{X}}
\providecommand{\frs}{_\text{FRS}}
\newcommand{\Lf}{\mathcal{L}_{f_s}}
\newcommand{\Lg}{\mathcal{L}_g}
\title{Safe Trajectory Synthesis for Autonomous Driving in Unforeseen Environments}
\author{Shreyas Kousik$^1$, Sean Vaskov$^1$, Matthew Johnson-Roberson$^2$, Ram Vasudevan$^1$%
\thanks{$1$. University of Michigan - Ann Arbor, Department of Mechanical Engineering. Contact \texttt{$\{$skousik, skvaskov, ramv$\}$@umich.edu}.\newline
\indent$2$. University of Michigan - Ann Arbor, Department of Naval Architecture and Marine Engineering. Contact \texttt{mattjr@umich.edu}.}}
\begin{document}
\maketitle
\thispagestyle{empty}
\pagestyle{empty}

\begin{abstract}
Path planning for autonomous vehicles in arbitrary environments requires a guarantee of safety, but this can be impractical to ensure in real-time when the vehicle is described with a high-fidelity model.
To address this problem, this paper develops a method to perform trajectory design by considering a low-fidelity model that accounts for model mismatch.
The presented method begins by computing a conservative Forward Reachable Set (FRS) of a high-fidelity model's trajectories produced when tracking trajectories of a low-fidelity model over a finite time horizon.
At runtime, the vehicle intersects this FRS with obstacles in the environment to eliminate trajectories that can lead to a collision, then selects an optimal plan from the remaining safe set.
By bounding the time for this set intersection and subsequent path selection, this paper proves a lower bound for the FRS time horizon and sensing horizon to guarantee safety.
This method is demonstrated in simulation using a kinematic Dubin's car as the low-fidelity model and a dynamic unicycle as the high-fidelity model.
\end{abstract}

\section{INTRODUCTION}
\label{sec:introduction}

Safe control of autonomous vehicles for use on public roads is a challenging problem that directly impacts both public safety and development costs for auto manufacturers. 
To be safe, autonomous vehicles must be able to account for errors in their sensors and dynamic models.
Moreover, since it is impractical to preplan an obstacle avoidance trajectory for every plausible environment and configuration of obstacles, controllers need to be constructed, evaluated for correctness, and selected in real-time. 

Autonomous vehicles typically apply a hierarchical control design structure with three levels \cite{falcone2007predictive,buehler2009darpa,gray2012predictive}. 
The highest-layer in this architecture is responsible for route planning using Dijkstra's algorithm or its variants. 
The mid-level controller generates trajectories to perform actions such as lane-keeping or obstacle avoidance. 
The low-level controller follows these generated trajectories by applying throttle, brake, and steering inputs. 
For computational efficiency, the mid-level controller often uses a lower degree of freedom model than the low-level controller. 
As a result, there is typically a non-trivial gap between the trajectories generated by the mid-level controller and what a vehicle is capable of executing.
This paper develops a real-time optimization based scheme for the mid-level controller that can provably design trajectories which can be safely followed by the low-level controller in arbitrary environments with static obstacles.

Various methods exist in the literature for mid-level controller design, which rely upon either temporal or spatial discretization, or both.
Rapidly Random Exploring Trees, for example, compute safe trajectories by sampling from a vehicle's input space and numerically integrating trajectories forward using a dynamic model \cite{lavalle2001randomized}. 
These methods can handle nonlinear dynamics and non-convex constraints and can even be made asymptotically optimal \cite{karaman2011sampling}.
Partially Observable Markov Decision Processes are also able to handle nonlinear dynamics, non-convex constraints, and bounded uncertainty, while planning trajectories in a spatially and temporally discretized space\cite{brechtel2014probabilistic}. 
Though each of these mid-level control design methods is powerful, they struggle to account for the uncertainty introduced by discretization, which can result in trajectories that are impossible to execute.
For this reason, others have focused solely on the real-time verification of the safety of a generated mid-level controller using approaches grounded in zonotopes \cite{althoff2014online}.

\begin{figure}
\includegraphics[width=1\columnwidth]{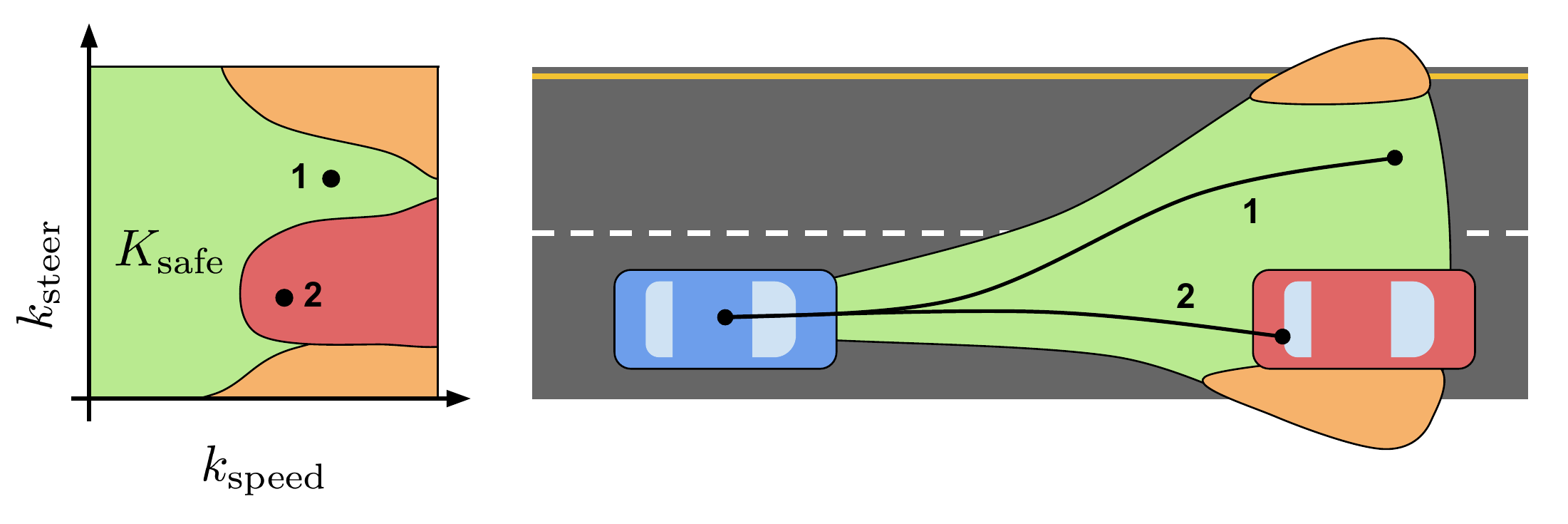}
\caption{An illustration of the Forward Reachable Set in the physical world (right) and the trajectory parameter space (left). 
Each trajectory parameter on the left corresponds to a trajectory in the physical world. 
The FRS (the funnel beginning from the blue car) is intersected with obstacles (the road boundaries and the red car) at run-time to identify trajectory parameters that lead to collisions (orange and red sets in left picture). 
In this sketch, the parameters labeled 1 produce a guaranteed-safe trajectory, whereas the parameters labeled 2 result in a collision.
Optimal parameters can be selected from the safe set, $K_\text{safe}$.}
\label{fig:overallsketch}
\end{figure}

Several approaches have recently been pursued to perform control design with safety guarantees.
Robust Model Predictive Control (MPC), for example, has been demonstrated to successfully account for uncertainty in the dynamics and can be used for safe control design for nonlinear vehicle models in the presence of non-convex constraints \cite{gao2014tube,shia2014semiautonomous}.
This requires linearizing the dynamics about a pre-specified, spatially discretized trajectory to synthesize a controller that is able to safely follow this trajectory while avoiding obstacles.
Thus, this approach requires solving a scenario-specific nonlinear program, which can be challenging to solve at run-time and, as a result, has few associated safety guarantees.
To try to address the more general safe control design problem, others have employed the Hamilton-Jacobi-Bellman Equation, which is typically solved using state-space discretization, to compute the set of safe feedback controllers in the presence of bounded disturbance \cite{ding2011reachability}.
These approaches rely upon a complete characterization of obstacle locations, which can make real-time applications challenging due to the computational expense associated with state-space discretization.
Despite this limitation, extensions of this approach have been successfully employed to perform cooperative control for connected vehicles \cite{dhinakaran2017hybrid}.
Similarly, the viability kernel, which is also computed using state-space discretization, has been employed to evaluate the set of possible trajectories for autonomous vehicles on pre-defined scenarios \cite{liniger2017real}. 

To avoid this curse of dimensionality, others have employed barrier functions using Lyapunov theory to devise safety-preserving controllers for adaptive cruise control and lane-keeping \cite{nilsson2016correct,xu2016correctness,wang2017safety}.
These approaches can quickly solve a quadratic program at run-time to guarantee safety, but the barrier function, which is computed off-line, is scenario specific. 
To overcome this situation-specific limitation, an approach relying upon funnel libraries was recently proposed to perform real-time control design in the presence of uncertainty \cite{majumdar2016funnel}.
This approach relies upon pre-computing a rich-enough finite family of trajectories, which is then searched at run-time to ensure safety. 


This paper presents an approach for mid-level control design that simultaneously accounts for nonlinear dynamics, non-convex constraints, and uncertainty without pre-specifying the scenario, pre-computing a finite family of trajectories, or relying on spatial or temporal discretization. 
The proposed method, which is depicted in Figure \ref{fig:overallsketch}, begins by computing the set of trajectories that can be reached by a high-fidelity vehicle model under a parameterized controller.
This Forward Reachable Set (FRS), which is computed off-line, is then intersected with obstacles at run-time using convex optimization to exclude controllers that could cause collisions. 
Finally, the remaining safe set of controllers is optimized over to minimize an arbitrary cost function, such as following a nominal trajectory, minimizing total acceleration, etc.

This paper makes the following pair of contributions:
first, a convex optimization based method to synthesize trajectories which can be safely tracked by a high fidelity vehicle model;
and second, a method to select braking, sensing, and planning time horizons to guarantee that a safety preserving trajectory always exists in an environment with static obstacles.
The rest of the paper is structured as follows: 
Section \ref{sec:notation} introduces the notation and assumptions utilized throughout the paper.
Section \ref{sec:problem_formulation} outlines methods to determine the FRS in the presence of bounded uncertainty, to compute the set of safe trajectory parameters in the presence of obstacles, and to compute an optimal trajectory from this set.
Section \ref{sec:implementation} describes the implementation of the methods formulated in Section \ref{sec:problem_formulation}. 
Section \ref{sec:examples} illustrates the performance of our method on an example.

\section{PRELIMINARIES}
\label{sec:notation}

This section introduces necessary notation for this paper, and formalizes the dynamics and assumptions using to describe the problem of interest.
This paper makes substantial use of semidefinite programming theory and sums of squares programming \cite{lasserre2009moments}.

\subsection{Notation}

The following notation is adopted throughout the remainder of the paper: 
Sets are italicized and capitalized.
The boundary of a set $K$ is $\partial K$.
Finite truncations of the set of natural numbers are \mbox{$\N_n:=\{1,\ldots,n\}$}.
The set of continuous functions on a compact set $K$ are $\mathcal C(K)$.
The Lebesgue measure on $K$ is denoted by $\lambda_K$.
We let $y_{K}$ denote the Lebesgue moments of $\lambda_K$.
The $i$-th component of a vector $v \in \R^n$ is denoted by $v_i$.
The ring of polynomials in $x$ is $\R[x]$, and the degree of a polynomial is the degree of its largest multinomial; 
the degree of the multinomial $x^\alpha,\,\alpha\in \N^n$ is $|\alpha|=\|\alpha\|_1$; and $\R_d[x]$ is the set of polynomials in $x$ with degree $d$.
Let $\text{vec}(p)$ denote the vector of coefficients of $p \in \R[x]$.

\subsection{Low and High-Fidelity Vehicle Models}

Let an autonomous vehicle by described by:
\begin{equation}
\label{eq:big_dyn}
\dot{x}(t) = f(t,x(t),u(t))
\end{equation}
where $f:[0,T] \times X \times U \to \R^n$, $T > 0$, $ X \subset \R^n$ and $U \subset \R^m$. 
The objective of this paper is to design safe trajectories for this autonomous vehicle which may be high dimensional and challenging to optimize over in real-time.
Our approach relies upon generating parameterized trajectories using a second, lower-dimensional model for this autonomous vehicle that shares a common set of states, $X_s \subset \R^{n_s}$ with $X_s \subset X$.
This model is described as:
\begin{equation}
\label{eq:small_dyn}
\begin{bmatrix}\dot{x}_s(t) \\ \dot{k}(t) \end{bmatrix} = \begin{bmatrix}  f_s(t,x_s(t),k(t)) \\ 0 \end{bmatrix}
\end{equation}
where the parameters, $k(t)$, do not evolve after initialization and are drawn from a set $K \subset \R^p$. 

\begin{ex}
\label{ex:model}
Consider a dynamic unicycle model:
\begin{equation}
\begin{bmatrix} \dot{x}(t) \\ \dot{y}(t) \\ \dot{\theta}(t) \\ \ddot{\theta}(t) \\ \dot{v}(t) \end{bmatrix} = \begin{bmatrix} v(t)\cos\left(\theta(t)\right) \\ v(t)\sin\left(\theta(t)\right) \\ \dot{\theta}(t) \\ u_1(t) \\ u_2(t) \end{bmatrix}
\end{equation}	
where $x$ and $y$ represent position; $\theta$ and $\dot{\theta}$ represent heading and yaw rate, both relative to the initial condition of the vehicle; and $v$ represents the longitudinal speed. 
The inputs are $u_1$ for steering and $u_2$ for throttle or braking.
Consider a low-fidelity, trajectory-producing model with shared states $\begin{bmatrix} x & y & \theta \end{bmatrix}^T$ such that:
\begin{equation}\label{eq:dubins_dyn}
\begin{bmatrix} \dot{x}(t) \\ \dot{y}(t) \\ \dot{\theta}(t) \end{bmatrix} = \begin{bmatrix} k_2\cos\left(\theta(t)\right) \\ k_2\sin\left(\theta(t)\right) \\ k_1 \end{bmatrix}
\end{equation}
with the trajectory parameters that are steering rate $k_1$ and speed $k_2$.
The unicycle model ``adds'' mass to the Dubins car.
\end{ex}

For the sake of convenience, the dynamics of the shared states, $X_S$, in the high dimensional model are assumed to be described by the first $n_s$ coordinates of $f$ throughout this paper. 
Let the controlled vehicle be represented as a compact set $X_0 \subset X_s$, typically a bounding box around the vehicle.
We make the following assumptions to ensure the well-posedness of the subsequent problem formulation. 
\begin{assum} \label{ass:set}
The sets $U$ and $X$ are compact and suppose $K,X_0,$ and $X_s$ have the following representation:
\begin{align}
K &= \left\{ k \in \R^p \mid h_{K_i}(k) \geq 0, ~ \forall i \in \{ 1,\ldots, n_K\} \right\}, \label{eq:ass_2_1}\\
X_0 &= \left\{ x_s \in \R^{n_s} \mid h_{0_i}(x_s) \geq 0, ~ \forall i \in \{ 1,\ldots, n_0\} \right\},  \label{eq:ass_2_2} \\
X_s &= \left\{ x_s \in \R^{n_s} \mid h_{X_i}(x_s) \geq 0, ~ \forall i \in \{ 1,\ldots, n_{X_s}\} \right\}. \label{eq:ass_2_3}
\end{align}
\end{assum}

The objective of this paper is to optimize over trajectories using the lower complexity dynamical model which are then followed by the high dimensional vehicle model. 
Unfortunately this tracking cannot be done perfectly. 
The higher fidelity model can select a controller, $u_k: [0,T] \times X \to \R^m$, to follow the generated trajectory which is parameterized by $k \in K$ (e.g. using a PID control loop); however there may still be a gap between the generated pair of trajectories.
To describe this error as a function of time, we make the following assumption:
\begin{assum}\label{ass:error_func}
There exists a bounded function $g: [0,T] \times X_s \to X_s $ such that:
\begin{align}\label{eq:error_func}
\max_{x \in A_{x_s}} \, \left|f_i(t,x,u_k(t,x) - f_{s,i}(t,x_s,k)\right| \leq g_i(t,x_s),
\end{align}
for all $x_s \in X_s$, $t \in [0,T]$, $k \in K$, and $i \in \{1,\ldots,n_s \}$, where $A_{x_s} := \{ x \in X \mid x_i = x_{s,i} \forall i \in \{1,\ldots,n_s \} \}$.
\end{assum}
In other words, the error in the shared states is bounded by $g$.
Though this paper does not describe a formal method to construct such a $g$, in the instance of polynomial, rational, or trigonometric dynamics, such a $g$ can be found numerically by applying Sums-of-Squares programming as we describe in Section \ref{sec:implementation}. 

\begin{ex}
For the unicycle model in Example \ref{ex:model}, $u_k$ can be a proportional velocity controller for the yaw rate and speed:
\begin{align}
\ddot{\theta}(t) = u_1(t) = 20\cdot(\dot{\theta}(t) - \dot{\theta}_\text{des}) \\
\dot{v}(t) = u_2(t) = 10\cdot(v(t) - v_\text{des})
\end{align}
where $(\dot{\theta}_\text{des}, v_\text{des})$ map to a $k_1$ and $k_2$ in the parameter space for the Dubins car model described in Equation \eqref{eq:dubins_dyn}.
Given this $u_k$ one can generate a $g$ that overapproximates the error when transitioning from top speed to a complete stop in the $v$ dimension; and when switching from turning full-left to turning full-right (or vice-versa) in the $\dot{\theta}$ dimension. 
Such a $g$, illustrated in Figure \ref{fig:error_dynamics}, could be:
\begin{equation}
\label{eq:g_definition}
g(t,x,y,\theta) = \begin{bmatrix}
v_\text{err}(t)\cdot(1 - \frac{1}{2}\theta^2)  \\
v_\text{err}(t)\cdot(\theta - \frac{1}{6}\theta^3) \\
\dot{\theta}_\text{err}
\end{bmatrix}
\end{equation}
where $v_\text{err}(t) = (t-1)^2$ and $\dot{\theta}_\text{err}(t) = (t-1)^4$.
\end{ex}

\begin{figure}
\centering
\includegraphics[width=1\columnwidth]{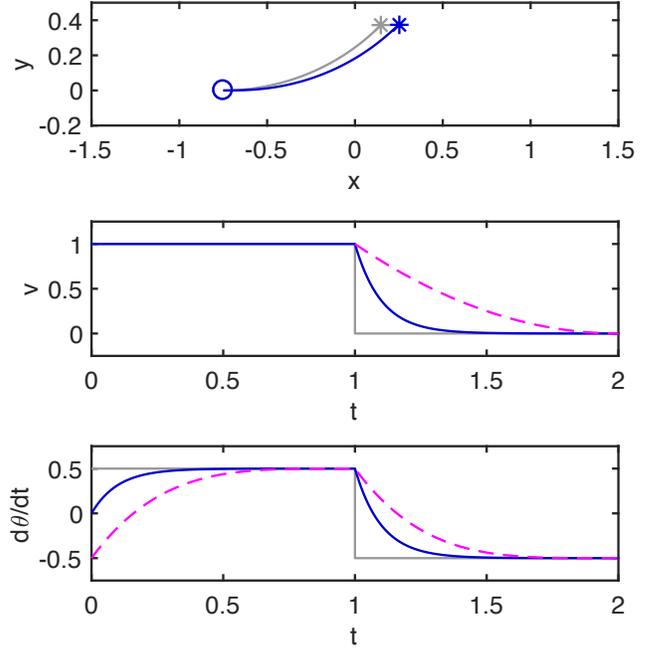}
\caption{An example of the worst-case error dynamics. 
The top subplot shows the $(x,y)$ trajectories of a Dubins car (grey) and unicycle model (blue), beginning at $(-0.75,0)$ and ending at the corresponding stars. Both trajectories begin at full speed (1 [m/s]) and then emergency brake at $t = 1$. The resulting velocity profile, with the same color scheme, is shown in the middle subplot. In addition, the Dubins car begins its trajectory with $\dot{\theta} = +0.5$ [rad/s], whereas the unicycle begins at $0$ [rad/s] and must catch up. At $t = 1$, the Dubins car switches instantaneously to turning at $-0.5$ [rad/s], as shown in the bottom subplot. The red dashed line in the middle and bottom subplots is the error function $g$ that overapproximates the worst-case possible error between the Dubins car and unicycle with a polynomial.}
\label{fig:error_dynamics}
\end{figure}

As a result of this assumption, the dynamics can be rewritten in a form amenable to computing the flow of the system $f_s$ while subject to error dynamics:
\begin{align}
\label{eq:error_low_model}
\dot{x}_s(t) = f_s(t,x_s(t),k) + g(t,x_s(t),k)~d(t)
\end{align}
where $d(t) \in [-1,1]$ for each $ t \in [0,T]$ can be chosen to describe the worst-case error behavior.
This use of $d$ is discussed further in Section \ref{subsec:FRS}.
To understand the gap between lower and higher dimensional models, we rely upon a pair of linear operators. 
First, the linear operator $\Lf:C^1\big([0,T] \times X_s \times K \big) \to C\big( [0,T] \times X_s \times K \big)$ on a test function $v$ as:
\begin{equation}
\Lf v(t,x_s,k) = \frac{\partial v}{\partial t}(t,x_s,k) + \sum_{i = 1}^{n_s} \frac{\partial v}{\partial x_{i,s}}(t,x_s,k) f_{i,s}(t,x_s,k).
\end{equation}
Next, define the linear operator $\Lg:C^1\big([0,T] \times X_s \times K \big) \to C\big( [0,T]\times X_s \times K\big)$ as:
\begin{equation}
\Lg v(t,x_s,k)  = \sum_{i = 1}^{n_s} \frac{\partial v}{\partial x_{i,s}}(t,x_s,k) g_{i}(t,x_s).
\end{equation}
Note that these linear operators are useful in summarizing the evolution of a system as we describe in Section \ref{subsubsec:FRS_computation}.



\subsection{Sensing, Planning, and Braking Time Horizons}
\label{subsec:time}

Next, we define a sensing, planning, and braking time horizon whose relationships are formalized in the next section to ensure the safety of the autonomous vehicle. 
To construct these definitions, we begin by formalizing obstacles: 
\begin{defn}
An \emph{obstacle} is any portion of $X_s$ that must be avoided by the vehicle.
In general, obstacles are represented as closed subsets of $X_s$.
Obstacles are assumed to be static in the system's local reference frame and are defined as:
\begin{equation}
X_{\text{obs}} = \left\{ x_s \in \R^{n_s} \mid h_{\text{obs}_i}(x_s) \geq 0 , \forall i \in \{1,\ldots,n_{\text{obs}} \} \right\}.
\label{eq:X_obs}
\end{equation}
\end{defn}
Obstacles can be other vehicles, the edges of the road, pedestrians, etc.
A trajectory is \emph{safe} if it does not intersect with an obstacle.
These obstacles are assumed to enter the scene as follows:
\begin{assum}\label{ass:sense}
During operation, obstacles always enter $X_s$ from outside, i.e., any obstacle's trajectory relative to the system must begin outside of, and may pass through, $\partial X_s$.
In addition, let the system be limited to a scalar top speed, denoted $\dot{x}_{s,\text{max}}$.
Then, this assumption requires that obstacles are sensed at a minimum sensor horizon $D_\text{sense} := \dot{x}_{s,\text{max}}\cdot T_\text{sense}$.
\end{assum}
$T_\text{sense}$ is a sensor time horizon that is to be determined, as a safety condition that we describe in the next section. 

Next, we assume that the time to process sensor information and plan a trajectory is bounded and known:
\begin{assum}\label{ass:plan_and_scan}
The time required to process sensor data and perform trajectory planning has a finite upper bound $\tau_\text{plan}$.
\end{assum}
Specific sensor processing to detect obstacles is outside of the scope of this paper, but in practice most modern obstacle detectors, working from camera, lidar, or radar, have a bounded processing time \cite{johnson2016driving,liu2016ssd}.
Though we do not prove that the trajectory planning time is bounded, the convex optimization based implementation that we utilize has a processing time that takes several seconds, at most, in practice as we describe in Section \ref{sec:implementation}.

\begin{rem}\label{rem:plan}
To understand how this $\tau_\text{plan}$ can be used, suppose the vehicle is planning as quickly as possible and executing a trajectory beginning at $t=0$ . 
While executing this trajectory, the vehicle can plan its next trajectory which is applied beginning at $t = \tau_{\text{plan}}$. 
The initial condition for this next trajectory is computed by the vehicle by forward integrating $f$ over the time interval $[0,\tau_{\text{plan}}]$.
This process can be repeated recursively to replan trajectories. 
\end{rem}


Finally, in certain instances it may be impossible to reach a desired position safely.
For these scenarios, we must assume that there exists a braking maneuver which can stop the car:
\begin{assum}\label{ass:brake}
There exists a family of braking trajectory parameters $K_b \subseteq K$ such that, without deviating from the spatial component of a pre-planned trajectory, $\dot{x}_s = 0$ in a finite amount of time $\tau_b(x_s) \in \R_+$ for each $x_s \in X_s$.
In addition, this stopped state can be held indefinitely.
\end{assum}
In other words, the vehicle is able to brake and stop along any trajectory it is following.
It follows from Assumption \ref{ass:brake} that there is a maximum braking time for the system, i.e. there exists $\tau_\text{stop} \in \R_+$ such that $\tau_b(x_s) \leq \tau_\text{stop}$ for all $x_s \in X_s$.
This braking time is used to construct planning and sensing time horizons to guarantee safety.


\section{PROBLEM FORMULATION}
\label{sec:problem_formulation}

This section outlines a method for trajectory design and presents a theorem relating the braking, planning, and sensing time horizons to ensure the safety of the trajectory design procedure.
The approach is broken into the following steps:
\begin{enumerate}
\item Precompute a \emph{Forward Reachable Set} (FRS) that captures all possible trajectories and associated parameters, subject to error function $g$, over a time interval $[0,T]$ (Section \ref{subsec:FRS}).
\item During operation, perform \emph{set intersection} of the FRS with obstacles in the vehicle's local environment to identify trajectory parameters, $K_{\text{safe}} \subset K$, which could cause collisions using a convex optimization technique (Section \ref{subsec:set_intersection}).
\item During operation, perform \emph{trajectory optimization} over $K_{\text{safe}}$ of a user-specified cost function to drive the system towards an objective while guaranteeing safety (Section \ref{subsec:traj_opt}).
\end{enumerate}
These steps are described next.

\subsection{Forward Reachable Set}
\label{subsec:FRS}

This subsection describes our approach to compute the FRS, which contains all points that can be reached from $X_0$ under the dynamics of $f$. 
The dynamics of $f$ may be high-dimensional in general as a result, which can make computing the FRS for $f$ impractical.
To overcome this challenge, we compute the FRS of the lower-dimensional model described in Equation \eqref{eq:error_low_model}.
\begin{equation}
\begin{aligned}
\X\frs = \Big\{(x_s,k) \in X_s \times K \ | \; &  \exists~ x_0 \in X_0, \tau \in [0,T], \text{ and } \\ 
& d: [0,T] \to [-1,1] \text{ s.t. } x(\tau) = x_s, \text{ where} \\
& \dot{x}(t) = f_s(t,x(t),k) + g(t,x(t),k)\; d(t)\\
& \text{a.e. } t \in [0,T] \text{ and } x(0) = x_0 \Big\}
\end{aligned}
\end{equation}
Notice that in this definition the error function $g$ along with $d(t) \in [-1,1]$ accounts for the difference between a trajectory, parameterized by $k \in K$, of the lower dimensional model and the higher-dimensional model which follows this trajectory using a control $u_k$.


\subsubsection{Selecting $T$ to Ensure Safety}

Identifying the time horizon $T$ for the FRS is the critical first step.
If $T$ is too short, then safe behavior cannot be guaranteed; if it is too long, then computing the FRS may be impractical.
Before describing how to compute $\X\frs$, we first describe how to select the time horizon $T$ to ensure safety while planning:
\begin{thm}\label{thm:time_horizon}
Let $T_\text{sense}$, $\tau_\text{plan}$, and $\tau_\text{stop}$ be as in Assumptions \ref{ass:sense},\ref{ass:plan_and_scan}, and \ref{ass:brake}, respectively.
Suppose the vehicle plans a trajectory every $\tau_\text{plan}$ and there exists a $ T \geq 0$ such that:
\begin{align}
\tau_\text{plan} + \tau_\text{stop} &\leq T  \\
T + \tau_\text{plan} &\leq T_\text{sense},
\end{align}
and a safe trajectory over the time horizon $[0,T]$. Then, there exists a safe trajectory for all $t > T$.
\end{thm}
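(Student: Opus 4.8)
The plan is to prove the theorem by induction over the discrete replanning instants $t_j := j\,\tau_\text{plan}$, $j \ge 0$, of the scheme in Remark~\ref{rem:plan}. Write $O_j$ for the set of obstacles that the vehicle has sensed by the time it finishes computing the plan it begins executing at $t_j$; by Assumption~\ref{ass:plan_and_scan} the underlying sensor snapshot is no older than $t_{j-1}$, and this plan is a parameter $k_j \in K$ that (via the FRS intersection of Section~\ref{subsec:set_intersection}) is certified collision-free over its entire local horizon $[0,T]$ against $O_j$, using the state forward-integrated over $[t_{j-1},t_j]$ as its initial condition. The vehicle commits to only the first $\tau_\text{plan}$ seconds of any such $k_j$ before the next plan takes over. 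The induction hypothesis I would carry is that \emph{at every $t_j$ the set of parameters that are certified collision-free over $[0,T]$ against the currently-sensed obstacles is nonempty}; the base case $j=0$ is the hypothesis of the theorem. Concatenating the executed fragments then produces a trajectory defined for all $t \ge 0$, and I would observe alongside the induction that every such fragment is itself collision-free, which yields the conclusion for all $t > T$.

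For the inductive step I would exhibit one explicit safe parameter at $t_{j+1}$, namely the \emph{braking fallback}: follow the spatial path of $k_j$ from the current state and brake. At $t_{j+1}$ only $\tau_\text{plan}$ of the $T$ seconds of $k_j$ have been used, so the path of $k_j$ over local time $[\tau_\text{plan},T]$ --- a window of length $T - \tau_\text{plan} \ge \tau_\text{stop}$ by the hypothesis $\tau_\text{plan}+\tau_\text{stop}\le T$ --- still lies ahead and has already been certified against $O_j$. By Assumption~\ref{ass:brake} the vehicle can come to rest within time $\tau_b \le \tau_\text{stop}$ without leaving that path and then hold the stopped configuration indefinitely; hence over its own horizon $[0,T]$ the braking fallback occupies a subset of the certified-safe spatial footprint of $k_j$, and so it avoids every obstacle in $O_j$ (as does, for the same reason, the fragment of $k_j$ actually executed over $[t_j,t_{j+1}]$).

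It remains to rule out obstacles that are not in $O_j$ but could become relevant after the snapshot (taken no earlier than $t_{j-1}$). Here I would invoke Assumption~\ref{ass:sense}: such an obstacle must enter $X_s$ through $\partial X_s$, so at the instant it first becomes sensable it is farther than $D_\text{sense} = \dot x_{s,\text{max}}\,T_\text{sense}$ from the vehicle; since the vehicle's speed is at most $\dot x_{s,\text{max}}$, it closes at most $D_\text{sense}$ of distance over any interval of length $T_\text{sense}$. The braking fallback completes all of its motion by global time $t_{j+1}+\tau_\text{stop}\le t_j + T = t_{j-1}+\tau_\text{plan}+T$, which is at most $t_{j-1}+T_\text{sense}$ by the hypothesis $T+\tau_\text{plan}\le T_\text{sense}$; thereafter the vehicle is stopped for good, and a static obstacle cannot approach a stationary vehicle. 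So over the whole window separating the snapshot from the end of the fallback's motion the vehicle cannot reach any obstacle outside $O_j$, and --- combined with the previous paragraph --- the braking fallback is collision-free against the full obstacle set. This closes the induction, so the safe set is nonempty at every $t_j$; the vehicle is therefore never forced into a collision, and a safe trajectory exists for all $t > T$.

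The step I expect to be the real obstacle is the bookkeeping in the last paragraph: one must pin down exactly which sensor snapshot each plan is certified against, bound the worst-case displacement of the vehicle over a horizon of length $\tau_\text{plan}+T$, and align it with $D_\text{sense}$ so that the inequality $T+\tau_\text{plan}\le T_\text{sense}$ does precisely the work of keeping newly-appearing obstacles out of reach until the vehicle has safely stopped. By contrast, the braking-fallback construction and the role of $\tau_\text{plan}+\tau_\text{stop}\le T$ are essentially mechanical once Assumption~\ref{ass:brake} is invoked.
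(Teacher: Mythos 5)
Your proof is correct and is essentially the paper's own argument: the paper organizes it as a three-way case split on obstacles according to when they become reachable, with Cases 2 and 3 reducing to Case 1 at the next replanning instant, which is exactly your induction unrolled --- $\tau_\text{plan}+\tau_\text{stop}\le T$ guarantees the braking fallback fits inside the previously certified horizon, and $T+\tau_\text{plan}\le T_\text{sense}$ keeps not-yet-sensed obstacles out of reach, just as in your two inductive-step paragraphs. Your write-up is somewhat more explicit about the inductive bookkeeping and the braking witness than the paper's, but the underlying ideas coincide.
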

\begin{proof}
This theorem follows directly from the assumptions, which we can use to construct the worst-case scenario.
At time $t = 0$, let the vehicle be following a trajectory that has been previously identified as safe for the time interval $[0,T]$.
Suppose trajectories are planned as described in Remark \ref{rem:plan}.
Obstacles that could cause collisions fall into one of three cases. 
It is sufficient to show that none of these cases can cause a collision.

\noindent {\bf Case 1:} Any obstacle that can be reached within time $T$.
These obstacles will already be avoided since the trajectory starting at time $t = 0$ is safe for the entire duration $[0,T]$ by definition, even if the vehicle must come to a stop.

\noindent {\bf Case 2:} Any obstacle that can be reached at a time $t \in (T,T_\text{sense}]$.
These obstacles are sensed according to Assumption \ref{ass:sense}.
The worst-case scenario, in this instance, is an obstacle that can be reachable infinitesimally after $t = T$.
Recall that the vehicle is replanning while traveling a duration of $\tau_\text{plan}$, so the plan beginning from time $t = \tau_\text{plan}$ incorporates the positions of all sensed obstacles by adjusting their relative position using the initial condition for the upcoming trajectory.
In addition, this plan incorporates the error that accumulates over the trajectory before $\tau_\text{plan}$ because the FRS uses Assumption \ref{ass:error_func}.
Therefore, at time $t = \tau_\text{plan}$, no obstacle is reachable within a time horizon less than $T - \tau_\text{plan}$, which is greater than or equal to $\tau_\text{stop}$.
Consequently, while traversing the first time interval $[0,\tau_\text{plan}]$, the vehicle determines whether or not it must begin stopping, or if a non-braking trajectory exists, at time $t = \tau_\text{plan}$.
As a result, at $t = \tau_\text{plan}$, the obstacle falls into Case 1.

\noindent {\bf Case 3:} Any obstacle that can be reached at a time $t > T_\text{sense}$.
We assume that the vehicle is detecting obstacles continually, as per Assumption \ref{ass:sense}.
Therefore, by similar logic to Case 2, the obstacle is no closer than $T_\text{sense} - \tau_\text{plan}$ at time $t = \tau_\text{plan}$.
The obstacle thus falls into Case 2 for the trajectory starting at $t = \tau_\text{plan}$.
\end{proof}


\subsubsection{FRS Computation}
\label{subsubsec:FRS_computation}
 
To compute the FRS one can solve the following linear program over the space of functions, which is adapted from \cite[Section 3.3, Program $(D)$]{majumdar2014convex}:
\begin{flalign} 
		& & \underset{v,w,q}{\text{inf}} \hspace*{0.25cm} & \int_{X_s \times K} w(x_s,k) ~ d\lambda_{X_s \times K} && (D) \nonumber \\
		& & \text{s.t.} \hspace*{0.25cm} & \Lf v(t,x_s,k) + q(t,x_s,k) \leq 0, && \text{on } [0,T] \times X_s \times K \nonumber\\
        & & & \Lg v(t,x_s,k) + q(t,x_s,k) \geq 0, && \text{on } [0,T] \times X_s \times K  \nonumber \\
        & & & -\Lg v(t,x_s,k) + q(t,x_s,k) \geq 0, && \text{on } [0,T] \times X_s \times K  \nonumber \\
        & & & q(t,x_s,k) \geq 0, && \text{on } [0,T] \times X_s \times K  \nonumber \\
        & & & -v(0,x_s,k) \geq 0, && \text{on } X_0 \times K  \nonumber \\
        & & & w(x_s,k) \geq 0,&& \text{on } X_s \times K \nonumber \\
        & & & w(x_s,k) + v(t,x_s,k) - 1 \geq 0, && \text{on } [0,T] \times X_s \times K \nonumber 
\end{flalign}
The given data in this problem are $f,g,X_s,X_0,K$ and $T$ and the infimum is taken over $(v,w,q) \in C^1\left([0,T] \times X_s \right) \times C(X_s) \times C([0,T] \times X_s)$. 

Notice first that the $1$-superlevel set of feasible $w$'s in $(D)$ are outer approximations of $\X\frs$:
\begin{lem}\label{lem:feasible_w}
Let $(v,w,q)$ be feasible functions to $(D)$, then $\X\frs \subset \big\{(x_s,k) \in X_s \times K \mid w(x_s,k) \geq 1 \big\}$.
\end{lem}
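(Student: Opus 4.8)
The plan is to take an arbitrary point $(x_s^*, k^*) \in \X\frs$ and show $w(x_s^*, k^*) \geq 1$ using the last constraint of $(D)$, namely $w(x_s,k) + v(t,x_s,k) - 1 \geq 0$ on $[0,T] \times X_s \times K$. Since this constraint holds for \emph{all} $t \in [0,T]$, it suffices to exhibit some time $\tau^* \in [0,T]$ at which $v(\tau^*, x_s^*, k^*) \leq 0$; then $w(x_s^*,k^*) \geq 1 - v(\tau^*,x_s^*,k^*) \geq 1$. The natural candidate is the time $\tau^*$ furnished by the definition of $\X\frs$: by definition there is an $x_0 \in X_0$, a $\tau^* \in [0,T]$, and a disturbance $d: [0,T] \to [-1,1]$ such that the trajectory $x(\cdot)$ of $\dot{x}(t) = f_s(t,x(t),k^*) + g(t,x(t),k^*)\,d(t)$ with $x(0) = x_0$ satisfies $x(\tau^*) = x_s^*$.

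The key step is to show that $v$ is nonincreasing (in fact nonpositive) along this trajectory. I would compute, for almost every $t$,
\begin{equation*}
\frac{d}{dt}\, v(t,x(t),k^*) = \Lf v(t,x(t),k^*) + d(t)\,\Lg v(t,x(t),k^*),
\end{equation*}
using the chain rule together with the definitions of $\Lf$ and $\Lg$ and the dynamics of $x(\cdot)$. Now invoke the first three inequality constraints of $(D)$: from $\Lg v + q \geq 0$ and $-\Lg v + q \geq 0$ we get $|\Lg v| \leq q$, hence $d(t)\,\Lg v \leq |\Lg v| \leq q$ since $|d(t)| \leq 1$; combining with $\Lf v + q \leq 0$, i.e. $\Lf v \leq -q$, yields
\begin{equation*}
\frac{d}{dt}\, v(t,x(t),k^*) \leq -q + q = 0
\end{equation*}
for almost every $t \in [0,\tau^*]$. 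Therefore $v(\tau^*, x(\tau^*), k^*) \leq v(0, x(0), k^*) = v(0, x_0, k^*)$. Finally, since $x_0 \in X_0$ and $k^* \in K$, the constraint $-v(0,x_s,k) \geq 0$ on $X_0 \times K$ gives $v(0,x_0,k^*) \leq 0$, so $v(\tau^*, x_s^*, k^*) \leq 0$, completing the argument.

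The main obstacle is the regularity/integration step: justifying that $t \mapsto v(t,x(t),k^*)$ is absolutely continuous and that its derivative is given by the chain-rule expression above for a.e. $t$, given only that $v \in C^1$, that $x(\cdot)$ is an absolutely continuous (Carathéodory) solution, and that $d$ is merely measurable. This is standard but deserves a sentence: $x(\cdot)$ is Lipschitz on $[0,\tau^*]$ because $f_s$, $g$ are continuous hence bounded on the compact set $[0,T] \times X_s \times K$ and $|d| \leq 1$, so the composition $t \mapsto v(t,x(t),k^*)$ is Lipschitz, hence absolutely continuous, and its derivative is computed by the chain rule wherever $x$ is differentiable, i.e. a.e. Everything else is a direct bookkeeping of the constraints of $(D)$, and the boundedness/continuity hypotheses on $f$, $g$, $X_s$, $K$, $X_0$ from Assumptions \ref{ass:set} and \ref{ass:error_func} supply exactly what is needed. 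Since $(x_s^*,k^*) \in \X\frs$ was arbitrary, the inclusion follows.
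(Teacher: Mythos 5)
Your proof is correct and follows essentially the same route as the paper's: propagate $v$ along the disturbed trajectory via the fundamental theorem of calculus, use the first four constraints of $(D)$ to bound $\frac{d}{dt}v(t,x(t),k)$ by $0$, and combine the initial-condition constraint on $X_0\times K$ with the last constraint of $(D)$ to conclude $w(x_s,k)\geq 1$. In fact your sign bookkeeping ($|\Lg v|\leq q$ and $\Lf v\leq -q$, hence $v(\tau,x(\tau),k)\leq v(0,x(0),k)\leq 0$) is the internally consistent version of the paper's displayed chain, whose inequalities are printed with reversed directions (e.g.\ ``$|\Lg v|\geq q$'' and ``$v(0,x(0),k)\geq 0$''), and your remark justifying absolute continuity of $t\mapsto v(t,x(t),k)$ is a precision the paper omits.
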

\begin{proof}
Suppose $(x_s,k) \in \X\frs$, then $\exists~\tau \in [0,T]$, $d:[0,T] \mapsto [-1,1]$ and $x:[0,T] \mapsto X_s$ such that $\dot{x}(t) = f_s(t,x(t),k) + g(t,x(t),k)d(t)$ for a.e. $t\in [0,T]$ with $x(\tau) = x_s$ and $x(0) \in X_0$.
Observe that:
\begin{align*}
v(\tau,x(\tau),k) &= v(0,x(0),k) + \int\limits_0^{\tau} \left(\Lf v(t,x(t),k) + \Lg v(t,x(t),k) d(t) \right) dt \\
&\geq v(0,x(0),k) + \int_0^{\tau} \left( \Lf v(t,x(t),k) + q(t,x(t),k) \right) dt \\
&\geq v(0,x(0),k),
\end{align*}
where the first equality follows from the Fundamental Theorem of Calculus, the second inequality follows from noticing that $\text{sup}_{d(t) \in [-1,1]} |\Lg v(t,x(t),k) d(t) | = |\Lg v(t,x(t),k) |$ and that $  |\Lg v(t,x(t),k) | \geq q(t,x(t),k) $ due to the constraints in $(D)$, and the final inequality follows from the constraints in $(D)$. 
Notice the desired result follows by noticing that $v(0,x(0),k) \geq 0$ for $x(0) \in X_0$ and applying the last constraint in $(D)$.
\end{proof}

The result of this lemma is that the $1$-superlevel set of any feasible $w$ contains all possible trajectories of the high fidelity vehicle model $f$ beginning from $X_0$.
In fact, one can prove that the solution to this infinite dimensional linear program allows one to compute $\X\frs$:
\begin{thm}{\cite[Theorem 3.5]{majumdar2014convex}}
There is a sequence of feasible solutions to $(D)$ whose $w$-component converges from above to an indicator function on $\X\frs$ in the $L^1$ norm and almost uniformly.
\end{thm}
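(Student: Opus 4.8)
The plan is to realize $(D)$ as the dual of an infinite-dimensional linear program $(P)$ over measures whose optimal value equals $\lambda_{X_s \times K}(\X\frs)$, and then to extract the convergence statement from strong duality together with Lemma~\ref{lem:feasible_w}. The primal $(P)$ optimizes over a quadruple $(\mu_0,\mu,\nu,\mu_T)$, where $\mu_0\ge 0$ is an initial measure on $X_0\times K$, $\mu\ge 0$ is an occupation measure on $[0,T]\times X_s\times K$, $\nu$ is a signed measure with $|\nu|\le\mu$ encoding the worst-case disturbance $d(t)\in[-1,1]$, and $\mu_T\ge 0$ is a final-time measure on $[0,T]\times X_s\times K$ recording the stopping time and terminal state; the constraints are the weak (Liouville) transport equation $\int \Lf v\,d\mu+\int \Lg v\,d\nu = \int v\,d\mu_T - \int v(0,\cdot,\cdot)\,d\mu_0$ for all test functions $v\in C^1$, together with $\mathrm{proj}_{*}\mu_T\le\lambda_{X_s\times K}$, and the objective is to maximize the total mass $\mu_T([0,T]\times X_s\times K)$. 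Under this correspondence, the constraint $\Lf v+q\le 0$ in $(D)$ is dual to $\mu\ge 0$, the pair $\pm\Lg v+q\ge 0$ together with $q\ge 0$ is dual to $|\nu|\le\mu$, the constraint $-v(0,\cdot)\ge 0$ on $X_0\times K$ is dual to $\mu_0\ge 0$, and $w\ge 0$ with $w+v-1\ge 0$ is dual to $0\le\mathrm{proj}_{*}\mu_T\le\lambda_{X_s\times K}$.

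First I would verify weak duality: integrating the $(D)$-constraints against a feasible primal quadruple — the same manipulation as in the proof of Lemma~\ref{lem:feasible_w}, now carried out against measures rather than along a single trajectory — yields $\mu_T([0,T]\times X_s\times K)\le \int_{X_s\times K} w\,d\lambda_{X_s\times K}$, so $\mathrm{val}(P)\le\mathrm{val}(D)$. The crux is then showing there is \emph{no duality gap}. Here one checks that $(P)$ is consistent and that its feasible set is weak-$*$ compact: testing the transport equation against $v\equiv 1$ gives mass conservation $\mu_0(X_0\times K)=\mu_T([0,T]\times X_s\times K)$, and since $\mathrm{proj}_{*}\mu_T\le\lambda_{X_s\times K}$ caps $\mu_T$, all of $\mu_0$, $\mu$ (mass $\le T\,\mu_0(X_0\times K)$) and $\nu$ ($|\nu|\le\mu$) have a priori bounded total variation; Banach--Alaoglu then makes the feasible set weak-$*$ compact and the linear objective weak-$*$ continuous, so $(P)$ attains its value, and a standard no-gap criterion for infinite linear programs (verifying closedness of the image of the adjoint constraint cone; see, e.g., \cite{lasserre2009moments}) gives $\mathrm{val}(P)=\mathrm{val}(D)$.

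It remains to identify $\mathrm{val}(P)$ with $\lambda_{X_s\times K}(\X\frs)$. The inequality $\le$ is immediate: by the transport equation any feasible $\mu_T$ is a superposition of endpoint-Diracs along trajectories of $f_s+g\,d$, hence its spatial marginal is supported on (the closure of) $\X\frs$ while being dominated by $\lambda_{X_s\times K}$. For $\ge$, one takes $\mu_T$ to be $\lambda_{X_s\times K}$ restricted to $\X\frs$, lifted with the stopping time, and builds the matching $\mu_0,\mu,\nu$ as the occupation and relaxed-control measures of a family of trajectories realizing each point of $\X\frs$; this uses the superposition principle and Filippov's relaxation theorem, the latter ensuring that relaxing $d$ to a Young measure does not enlarge the reachable set beyond the closure of $\X\frs$, so the constructed measures are genuinely primal-feasible. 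Finally, let $(v_n,w_n,q_n)$ be a minimizing sequence for $(D)$. By Lemma~\ref{lem:feasible_w}, $w_n\ge \1_{\X\frs}$ pointwise (using $w_n\ge 1$ on $\X\frs$ and $w_n\ge 0$ elsewhere), while $\int w_n\,d\lambda\to\mathrm{val}(D)=\mathrm{val}(P)=\int\1_{\X\frs}\,d\lambda$; hence $w_n-\1_{\X\frs}\ge 0$ and $\|w_n-\1_{\X\frs}\|_{L^1}\to 0$, i.e.\ $w_n\to\1_{\X\frs}$ from above in $L^1$. Passing to a subsequence along which the convergence is also pointwise almost everywhere and applying Egorov's theorem on the finite-measure space $X_s\times K$ upgrades this to almost uniform convergence, which completes the proof.

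The chief obstacle is the no-gap step together with the careful treatment of the disturbance: one must set the measure spaces and their pairings precisely enough that weak-$*$ compactness and the closedness hypothesis of the abstract duality theorem genuinely hold, and one must pass to relaxed controls so that $\mathrm{val}(P)$ is the Lebesgue volume of $\X\frs$ (its closure) and not of some strictly larger relaxed reachable set; the remaining work is bookkeeping on signs in the transport equation and a routine appeal to Egorov.
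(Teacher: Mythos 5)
The paper offers no proof of this theorem---it is imported verbatim by citation to \cite[Theorem 3.5]{majumdar2014convex}---and your proposal is a faithful reconstruction of the occupation-measure duality argument used in that reference: a primal LP over $(\mu_0,\mu,\nu,\mu_T)$ with the Liouville constraint and $|\nu|\le\mu$ encoding $d(t)\in[-1,1]$, absence of a duality gap via weak-$*$ compactness, identification of the primal value with $\lambda_{X_s\times K}(\X\frs)$ via the superposition principle (where the convexity of the velocity set $\{f_s+g\,d : d\in[-1,1]\}$ makes the Filippov relaxation step harmless), and the concluding $L^1$-plus-Egorov argument built on Lemma~\ref{lem:feasible_w}. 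This is essentially the same approach as the cited source, and your sketch is correct.
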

In Section \ref{subsec:frs_imp} we describe how to generate feasible solutions to $(D)$ using semidefinite programming over polynomial representations of continuous functions.
Once this computation is completed once offline, the result can be used online by performing real-time set intersection as described next.

\subsection{Set Intersection}
\label{subsec:set_intersection}

Any outer-approximation to $\X\frs$ can be intersected with obstacles in the state space to return the trajectory parameters that could result in a collision. 
The complement of the set returned by this intersection are then parameters which can be safely followed by the high fidelity vehicle model.
To describe this process, consider a $w$ generated as a solution to $(D)$. 
One can construct a closed inner approximation to the set of safe gains, denoted $K_\text{safe} \subset K$ by solving the following optimization problem:
\begin{flalign} 
		& & \underset{h}{\text{sup}} \hspace*{0.25cm} & \int_K h(k)d\lambda_K && \label{eq:set_int_program} \\
		& & \text{s.t.} \hspace*{0.25cm} & 1 - w(x_s,k) - h(k) \geq 0, && \text{on } X_\text{obs} \times K \nonumber \\
        & & & h(k) \leq 1, && \text{on } K \nonumber
\end{flalign}
where $w$ is the given data and the supremum is taken over $h \in C(K)$. 

To appreciate how his optimization problem generates an inner approximation to the set of safe gains, consider a trajectory parameterization, $k$, that generates a trajectory which intersects an obstacle as illustrated in Figure \ref{fig:set_int_ex}.
In this instance, there exists some $x_s$ in $X_\text{obs}$ such that $w(x_s,k) \geq 1$. 
In that instance the first constraint in the previous optimization problem requires that $h(k)$ be less than zero. 
This observation is formalized in the next lemma:
\begin{lem} \label{lem:feasible_h}
Let $h$ be a feasible function to Equation \eqref{eq:set_int_program}. Then $\{k \in K \mid h(k) \geq 0 \} \subset K_\text{safe}$.
\end{lem}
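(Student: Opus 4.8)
The plan is to read Lemma~\ref{lem:feasible_h} as the statement that every parameter at which the witness $h$ is nonnegative is genuinely safe, and to get there by a single substitution into the first constraint of \eqref{eq:set_int_program} followed by an appeal to Lemma~\ref{lem:feasible_w}. Concretely, recall that $k \in K_\text{safe}$ means no state reachable by the high-fidelity model while tracking the parameter-$k$ trajectory from $X_0$ lies in $X_\text{obs}$; equivalently, $(x_s,k) \notin \X\frs$ for every $x_s \in X_\text{obs}$. By Lemma~\ref{lem:feasible_w} we have $\X\frs \subset \{(x_s,k) \mid w(x_s,k) \geq 1\}$, whose contrapositive supplies exactly the tool needed: $w(x_s,k) < 1$ implies $(x_s,k) \notin \X\frs$. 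Hence it suffices to show that $w(x_s,k) < 1$ for every $x_s \in X_\text{obs}$ whenever $h(k) \geq 0$.

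First I would fix $k \in K$ with $h(k) \geq 0$ and an arbitrary $x_s \in X_\text{obs}$, so that $(x_s,k) \in X_\text{obs} \times K$ -- precisely the domain on which the first constraint of \eqref{eq:set_int_program} is imposed. That constraint reads $1 - w(x_s,k) - h(k) \geq 0$, i.e. $w(x_s,k) \leq 1 - h(k) \leq 1$. Combining this with the contrapositive of Lemma~\ref{lem:feasible_w} shows $x_s$ is not reachable under parameter $k$; since $x_s \in X_\text{obs}$ was arbitrary, the entire set of states reachable under $k$ misses $X_\text{obs}$, which is exactly $k \in K_\text{safe}$, and as $k$ was an arbitrary point of $\{k \in K \mid h(k) \geq 0\}$ the inclusion follows. (The remaining constraint $h \leq 1$ plays no role in this argument; it is present only to keep the supremum in \eqref{eq:set_int_program} finite and the optimizer normalized.)

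The one place that needs care -- and the step I expect to be the real, if minor, obstacle -- is the strict-versus-nonstrict inequality at the level set $\{w = 1\}$: Lemma~\ref{lem:feasible_w} only provides the outer approximation $\{w \geq 1\} \supset \X\frs$, so it licenses ``$w(x_s,k) < 1 \Rightarrow$ unreachable'' but not ``$w(x_s,k) \leq 1 \Rightarrow$ unreachable,'' whereas $h(k) \geq 0$ yields only $w(x_s,k) \leq 1$. I would close this gap in one of three ways, in order of preference: (i) take the computed safe set to be $K_\text{safe} = \{k \in K \mid w(x_s,k) \leq 1 \text{ for all } x_s \in X_\text{obs}\}$, in which case the chain $1 - w(x_s,k) \geq h(k) \geq 0$ is already the conclusion; (ii) observe that a collision together with $h(k) \geq 0$ forces $h(k) = 0$ and $w(x_s,k) = 1$ simultaneously, and dispatch this edge case using closedness of $\{h \geq 0\}$ (it is the closure of the open set $\{h > 0\}$ on which the strict argument applies) together with continuity of $w$; or (iii) note that the discrepancy is confined to the boundary set $\{w = 1\}$, which is negligible in the $L^1$/almost-uniform sense in which $(D)$ is actually solved. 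Apart from this boundary bookkeeping, the proof reduces to the single substitution above.
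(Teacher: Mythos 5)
Your substitution argument is exactly the paper's intended reasoning: the paper states this lemma without a formal proof, and its justification is the informal paragraph immediately preceding it, which is your argument run in the contrapositive (a collision under $k$ gives some $x_s \in X_\text{obs}$ with $w(x_s,k) \geq 1$, so the first constraint forces $h(k) \leq 0$). The strict-versus-nonstrict issue you flag at $\{w = 1\}$ is genuine, but it is an imprecision in the paper itself rather than a gap in your proof --- the paper claims the constraint forces $h(k)$ to be ``less than zero'' when it only yields $h(k) \leq 0$. Note, though, that none of your three patches fully repairs it: (i) silently redefines $K_\text{safe}$ to be the complement of the projection of $\{w \geq 1\} \cap (X_\text{obs} \times K)$ rather than the set of truly safe gains; (ii) fails because $\{h \geq 0\}$ need not be the closure of $\{h > 0\}$ and limits of safe parameters need not be safe; and (iii) yields no pointwise guarantee, which is what safety requires. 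The clean fix is simply to weaken the conclusion to $\{k \in K \mid h(k) > 0\} \subset K_\text{safe}$, or equivalently to adopt your option (i) explicitly as the operative definition of the certified-safe set.
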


By exploiting the polynomial outer approximation of the $\X\frs$ which is generated in Section \ref{subsec:frs_imp} one can formulate a convex optimization method to compute a closed, inner approximation to $K_\text{safe}$.
This optimization method is described in Section \ref{subsec:set_int_imp}. 
Note that if $T$ is chosen as in Theorem \ref{thm:time_horizon}, then as a result of Assumption \ref{ass:brake}, $K_\text{safe} \neq \emptyset$.

\begin{figure}
\centering
\includegraphics[width=1\columnwidth]{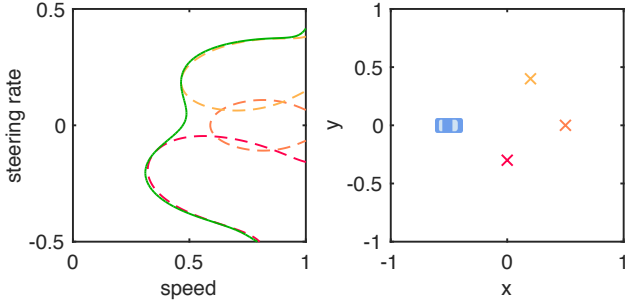}
\caption{An example of set intersection, with $X_\text{obs}$ chosen as three points in $X_s$. On the right is the $(x,y)$ subspace of $X_s$ with each point obstacle shown, and the vehicle plotted in blue. On the left is the trajectory parameter space $K$, with three dashed-line contours containing an outer approximation of the trajectory parameters that would cause a collision with each point shown on the right (the colors match between the points and contours). The set intersection step returns the subset of $K$ shown by the green contour, which outer approximates all trajectory parameters that could result in collisions with any of the three points in $X_\text{obs}$. Therefore, $K_\text{safe}$ is inner-approximated.}
\label{fig:set_int_ex}
\end{figure}

\subsection{Trajectory Optimization}
\label{subsec:traj_opt}
Once $K_\text{safe}$ is determined for an obstacle, one can optimize the original dynamics directly over the set. 
Specifically, if a user specifies a continuously differentiable cost function, $J: K \to \R$ then one can optimize this cost function with a constraint that enforces that $k \in K_{\text{safe}}$.
Since each $k$ corresponds to a $u_k$ one can optimize directly over safe trajectories of the high fidelity model that travel to some waypoint, minimize total acceleration along a trajectory, match a desired system speed, etc.

This constrained nonlinear optimal control problem can be solved in a variety of different ways via collocation, solving a variational equation, sampling, or using convex relaxations \cite{zhao2016control}. 
If this optimization program is unable to conclude within $\tau_{\text{plan}}$, then one can always apply a braking maneuver which always exists as described in Section \ref{subsec:set_intersection}.



\section{IMPLEMENTATION}
\label{sec:implementation}

This section describes the specific implementations of the FRS Computation, the set intersection, and the trajectory optimization described in Section \ref{sec:problem_formulation}.
For implementation, we require the following assumption:
\begin{assum}
The functions $f_s$ is a polynomial in $\R[t,x_s,k]$, and the error function $g$ is in $\R[t,x_s]$. In addition, the sets $K$, $X_0$, $X_s$, and $X_\text{obs}$ are semialgebraic, defined by polynomials in their respective spaces as in Equations \eqref{eq:ass_2_1} - \eqref{eq:ass_2_3} and \eqref{eq:X_obs}.
\end{assum}

\subsection{FRS Computation}
\label{subsec:frs_imp}

To solve $(D)$, we construct a sequence of convex sums-of-squares programs by relaxing the continuous function in $(D)$ to polynomial functions with degree truncated to $2l$. 
The inequality constraints in $(D)$ then transform into SOS constraints which then transforms $(D)$ into a semidefinite program\cite{parrilo2000structured}
To formulate this problem, let $h_{\tau} = t(T - t)$ and define $Q_{2l}(h_{\tau},h_{X_1},\ldots,h_{X_{n_{X_s}}},h_{K_1},\ldots,h_{K_{n_K}}) \subset \mathbb{R}_{2l}[t,x_s,k]$ to be the set of polynomials $q \in \mathbb{R}_{2l}[t,x_s,k]$ (i.e. of total degree less than $2l$) expressible as:
\begin{equation}
  q = s_0 + s_1 h_{\tau} + \sum_{i=1}^{n_{X_s}} s_{i+2} h_{X_i} + \sum_{i=1}^{n_{K}} s_{i+{n_{X_S}}+2} h_{K_i},
\end{equation}
for some polynomials $\{s_i\}_{i=0}^{n_{X_s}+n_{K}+1} \subset \R_{2k}[t,x,k]$ that are sums of squares of other polynomials, where we have dropped the dependence on $t,x,$ and $k$ in each polynomial for the sake of convenience.
Note that every such polynomial is non-negative on $[0,T] \times X_s \times K$ \cite[Theorem 2.14]{lasserre2009moments}.
Define $Q_{2l} (h_{0_1},\ldots,h_{0_{n_0}},h_{K_1},\ldots,h_{K_{n_K}}) \subset \mathbb{R}_{2l}[x_s,k]$, $Q_{2l} (h_{X_1},\ldots,h_{X_{n_{X_s}}},h_{K_1},\ldots,h_{K_{n_K}}) \subset \mathbb{R}_{2l}[x_s,k]$, $Q_{2l} (h_{\text{obs}_1},\ldots,h_{\text{obs}_{n_\text{obs}}},h_{K_1},\ldots, h_{K_{n_K}}) \subset \mathbb{R}_{2l}[x_s,k]$, and $Q_{2l}(h_{K_1},\ldots, h_{K_{n_K}}) \subset \mathbb{R}_{2l}[k]$ similarly.

Employing this notation, the $l$-th relaxed semidefinite programming representation of $(D)$, denoted $(D_l)$, is defined as:
\begin{flalign} 
		& & \underset{v,w,q}{\text{inf}} \hspace*{0.25cm} & y_{X_s \times K}^T \textrm{vec}(w) && (D_l) \nonumber \\
		& & \text{s.t.} \hspace*{0.25cm} & -\Lf v - q \in Q_{2l}(h_{\tau},h_{X_1},\ldots,h_{X_{n_{X_s}}},h_{K_1},\ldots,h_{K_{n_K}}) && \nonumber\\
        & & & \Lg v+ q \in Q_{2l}(h_{\tau},h_{X_1},\ldots,h_{X_{n_{X_s}}},h_{K_1},\ldots,h_{K_{n_K}}) && \nonumber \\
        & & & -\Lg v + q\in Q_{2l}(h_{\tau},h_{X_1},\ldots,h_{X_{n_{X_s}}},h_{K_1},\ldots,h_{K_{n_K}}) &&  \nonumber \\
        & & & q \in Q_{2l}(h_{\tau},h_{X_1},\ldots,h_{X_{n_{X_s}}},h_{K_1},\ldots,h_{K_{n_K}}) &&  \nonumber \\
        & & & -v(0,\cdot) \in Q_{2l} (h_{0_1},\ldots,h_{0_{n_0}},h_{K_1},\ldots,h_{K_{n_K}}) &&  \nonumber \\
        & & & w \in Q_{2l} (h_{X_1},\ldots,h_{X_{n_{X_s}}},h_{K_1},\ldots,h_{K_{n_K}})&&  \nonumber \\
        & & & w + v - 1 \in Q_{2l}(h_{\tau},h_{X_1},\ldots,h_{X_{n_{X_s}}},h_{K_1},\ldots,h_{K_{n_K}}) && \nonumber 
\end{flalign}
where the infimum is taken over the vector of polynomials $(v,w,q) \in \R_l[t,x_s,k] \times \R_l[x_s,k] \times \R_l[t,x_s,k]$. 

If $w_l$ denotes the $w$-component of the solution to $(D_l)$, one can prove that $w_l$ converges from above to an indicator function on $\X\frs$ in the $L^1$ norm \cite[Theorem 6]{majumdar2014convex}.
Importantly since each such $w_l$ is feasible with respect to the constraints in $(D)$, one can apply the result of Lemma \ref{lem:feasible_w} to prove that the $1$-superlevel set of $w_l$ is an outer approximation to $\X\frs$.  

\subsection{Set Intersection}\label{subsec:set_int_imp}

In this section, we present a semidefinite program to generate a closed inner-approximation to $K_\text{safe}$ and comment on computing $\tau_\text{plan}$.
Using the notation developed in Section \ref{subsec:frs_imp}, consider the following semidefinite formulation of Equation \eqref{eq:set_int_program}:
\begin{flalign} 
		& & \underset{h}{\text{sup}} \hspace*{0.25cm} & y_{K}^T \textrm{vec}(h) && \label{eq:SDP_h}  \\
		& & \text{s.t.} \hspace*{0.25cm} & 1 - w - h \in Q_{2l} (h_{\text{obs}_1},\ldots,h_{\text{obs}_{n_\text{obs}}},h_{K_1},\ldots, h_{K_{n_K}}) && \nonumber\\
        & & & 1 - h \in Q_{2l}(h_{K_1},\ldots, h_{K_{n_K}}) && \nonumber 
\end{flalign}
where the given data is any feasible $w \in \R_l[x_s,k]$ to $(D_l)$ and the supremum is taken over polynomials $h \in \R_l[k]$. 
Note again that since the constraints in this optimization problem are sufficient to ensure positivity on the required sets in Equation \eqref{eq:set_int_program}, one can apply the result of Lemma \ref{lem:feasible_h} to prove that the $0$-superlevel set of $h_l$ is an inner approximation to $K_{\text{safe}}$.





Solving such an SDP, while possible in real time, is still computationally expensive.
Compared to the trajectory optimization step, which takes milliseconds, the set intersection step can take several tenths of a second per obstacle, and is thus the primary contributor to $\tau_\text{plan}$.
The planning time increases with the number of dimensions of $X_\text{obs}$ and with the number of obstacles.
We find $\tau_\text{plan}$ by precomputing an FRS over an arbitrary, short time horizon $T$, then empirically determining $\tau_\text{plan}$ by performing set intersection with representative obstacles and environments.

\subsection{Trajectory Optimization}\label{subsec:traj_opt_imp}
We implement trajectory optimization by directly passing $h_l$ from solving \eqref{eq:SDP_h} as a nonlinear constraint to an off-the-shelf gradient-descent solver, such as MATLAB's \texttt{fmincon}.

\section{SIMULATION RESULTS}
\label{sec:examples}
This section presents simulation results of the proposed method implemented on the unicycle model example.
Each semidefinite program was prepared using a custom software toolbox and the modeling tool YALMIP \cite{lofberg2004yalmip}.
The programs are run with commercial solver MOSEK on a machine with $1$ TB availabe memory. 

\subsection{FRS Computation}
We computed the FRS for a 3$^\text{rd}$ order Taylor-expanded Dubins car as the low-fidelity model $f_s$.
Trajectories produced by this model were tracked by the unicycle model from Equation \eqref{eq:big_dyn} as the high-fidelity model $f$.
The vehicle's representation as an initial distribution $X_0 \subset X_s$, was a rectangle of length $0.2$ [m] in $x$ and width $0.1$ [m] in $y$, at $0^\circ$ initial heading, and centered at $x=-0.75$ and $y=0$.
This is the same vehicle representation shown in all previous figures.

We chose $\tau_\text{stop} = \tau_\text{plan} = 0.5$ [s], so $T = 1$ [s].
The stopping time can be seen in Figure \ref{fig:error_dynamics}. 
The FRS computation took 79 hours and used a maximum of 150 GB of memory 

\subsection{Set Intersection and Trajectory Planning}

We used the precomputed FRS for safe trajectory planning in $1000$ simulated trials in MATLAB on the aforementioned machine.
For each trial, the vehicle began at the same initial location and heading, surrounded by $1-10$ randomized obstacles and a randomly-located goal to reach.
The vehicle's initial speed, and the desired speed to maintain for the duration of the trial, were randomly chosen between $0.25$ and $0.75$ [m/s].

Obstacles were represented as line segments between $0.1$ and $0.2$[m] in length, with random location and orientation.
The obstacles were always placed between the vehicle and the goal.
We checked for crashes conservatively for each trial, by inspecting if any obstacle was within a circle circumscribing the rectangular vehicle at any point of the vehicle's trajectory. 
Using this method, \emph{no crashes were detected in any trial}.
Out of all the trials, $82\%$ reached the goal, and $15\%$ performed an emergency braking maneuver (by setting $v_\text{des} = 0$). 
The remaining 3\% hit a simulation iteration limit.
Examples of the vehicle's path from a randomly-generated trial and from two constructed emergency braking cases are shown in Figure \ref{fig:example_trial}.

\begin{figure}
\centering
\includegraphics[width=1\columnwidth]{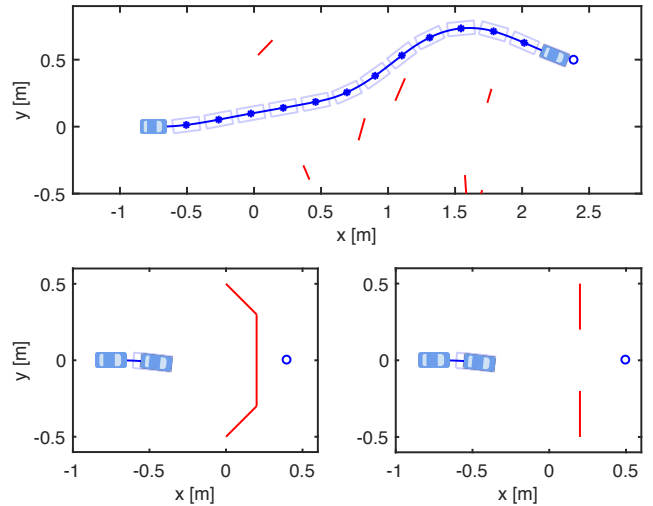}
\caption{The top subplot shows an example result out of the $1000$ trials.
This trial used eight randomly-generated obstacles.
The vehicle begins on the left at $x = -0.75$ and reaches a randomly-generated goal near $(2.5, 0.5)$, plotted as a blue circle.
Every $\tau_\text{plan} = 0.5$[s], the vehicle replans its trajectory, shown by an asterisk plotted on the global trajectory in blue.
The bounding box of the vehicle at each planning step is shown as a grey rectangle. In the bottom-left subplot, an obstacle was constructed between the vehicle and the goal, forcing an emergency braking maneuver. In the bottom-right subplot, an obstacle was constructed with a hole that would allow the vehicle to pass, but the set intersection result is overly conservative, resulting in a braking maneuver.}
\label{fig:example_trial}
\end{figure}

Currently, our implementation cannot consistently achieve $\tau_\text{plan} = 0.5$ [s].
Consequently, instead of replanning and driving simultaneously, we pause time every 0.5 [s] of the simulation to guarantee that the vehicle can finish replanning.
In a physical implementation, if $\tau_\text{plan}$ is exceeded, then the vehicle must emergency brake; recall that a safe braking trajectory is always available.
As shown in Figure \ref{fig:planning_time_vs_Nobs}, $\tau_\text{plan}$ scales linearly with the number of obstacles.

\begin{figure}
\centering
\includegraphics[scale=0.45,trim={1cm 6cm 1cm 7cm},clip]{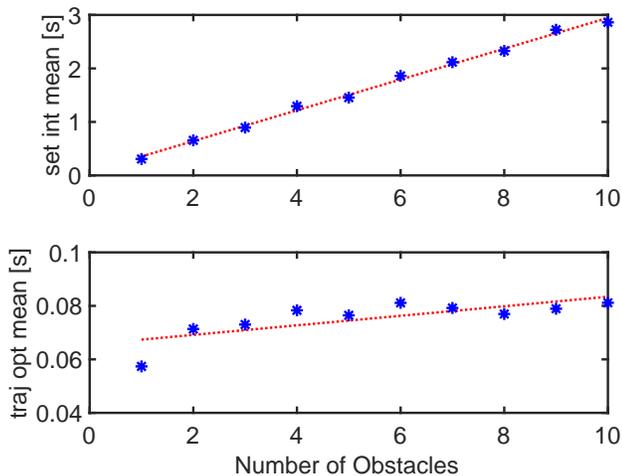}
\caption{The mean set intersection time (top) and trajectory optimization time (bottom) versus the number of obstacles. Over the $1000$ trials, each number of obstacles from $1$ to $10$ was used for $100$ trials. Notice that set intersection takes up to $3$[s], and scales with the number of obstacles. On the other hand, the trajectory optimization takes around $80$ [ms] and has low correlation with number of obstacles.}
\label{fig:planning_time_vs_Nobs}
\end{figure}




\section{CONCLUSION}
\label{sec:conclusion}


This paper presents a method to plan safe trajectories with obstacle avoidance for autonomous vehicles.
This approach is able to guarantee safety in arbitrary environments for multiple, static obstacles.
The method begins with computing the forward reachable set (FRS) of parameterized trajectories that a vehicle can realize.
This set is computed in continuous space and time, and is robust to model uncertainty between the dynamics of the vehicle's mid- and low-level controllers.

As an example, we use a kinematic Dubin's car and dynamic unicycle model as low- and high-fidelity models.
At runtime, the FRS is intersected with obstacles to eliminate unsafe trajectories, and an optimal trajectory is chosen from the remaining, safe trajectories.
This method is proven to ensure vehicle safety for present and future static obstacles by considering the time required for path planning, the time required to stop the vehicle, and the error between the low- and high-fidelity models.
One thousand simulations with randomly-located obstacles were run to show the effectiveness of this method. 


The next step in applying this method is to expand the error function $g$ beyond modeling uncertainty.
To reduce the conservativeness of the approach, we plan to explore extension to the FRS computation that incorporate confidence level sets \cite{mohan2016convex,holmes2016convex}.
The error function $g$ can also be improved by considering time variation of trajectory parameters across planning steps by posing the FRS computation as a hybrid problem \cite{shia2014convex}.
Finally, we plan to use convex optimization to find a global solution to the nonlinear trajectory optimization problem at each planning step \cite{zhao2016control,zhao2017optimal}.

\bibliography{conditionsforsafetybib}
\bibliographystyle{plain}

\end{document}